\newcommand{\Maxsf}{\mathbf{Max}}
\newcommand{\Minsf}{\mathbf{Min}}
\newcommand{\play}{\mathsf{play}}
\newcommand{\weight}{\mathsf{weight}}
\newcommand{\CCE}{CEC}
\newcommand{\CZ}{ZC}
\newcommand{\CNZ}{NZC}
\newcommand{\CM}{MC}
\newcommand{\Value}[1][]{{\mathsf{Val}_{#1}}}
\begin{document}

\title{The Value Problem for Weighted Timed Games with Two Clocks is Undecidable} 
\titlerunning{Two-Clock Weighted Timed Games}
%
\author{Quentin Guilmant\orcidID{0009-0004-7097-0595} \and
Joël Ouaknine\thanks{Jo\"el Ouaknine is also affiliated with Keble
  College, Oxford as emmy.network Fellow, and is supported by ERC
  grant DynAMiCs (101167561) and DFG grant 389792660 as part
of TRR 248.}\orcidID{0000-0003-0031-9356} \and
Isa Vialard\orcidID{0000-0002-7261-9342}}
\authorrunning{Q. Guilmant et al.}
%
\institute{Max Planck Institute for Software Systems,
	Saarland Informatics Campus,
	Saarbrücken,
	Germany}


\maketitle
\begin{abstract}
We prove that the Value Problem for weighted timed games (WTGs) with
two clocks and non-negative integer weights is undecidable --- even under a time bound.
  The Value Problem for weighted timed games (WTGs) consists in
  determining, given a two-player weighted timed game with a
  reachability objective and a rational threshold, whether or not the
  value of the game exceeds the threshold.  This problem was shown to
  be undecidable some ten years ago for WTGs making use of at least
  three clocks, and is known to be decidable for single-clock WTGs.
  Our reduction encodes a deterministic two-counter machine using two clocks and uses “punishment” gadgets that let the opponent detect and penalize any incorrect simulation. This closes one of the last remaining major gaps in our algorithmic understanding of
  WTGs.
\end{abstract}

\section{Introduction}


Real-time systems are not only ubiquitous in modern technological
society, they are in fact increasingly pervasive in critical applications ---
from embedded controllers in automotive and avionics platforms to
resource-constrained communication protocols. In such systems,
exacting timing constraints and quantitative objectives must often be
met simultaneously. Weighted Timed Games (WTGs), introduced over two
decades
ago~\cite{DBLP:conf/stacs/MalerPS95,DBLP:conf/concur/AlurH97,DBLP:conf/concur/HenzingerHM99,AlurRajev-OptReachWTG,BouyerEtAl-OptStratWTG}, provide a powerful
modelling framework for the automatic synthesis of controllers in such
settings: they combine the expressiveness of Alur and Dill's
clock-based timed automata with $\Minsf$-$\Maxsf$ gameplay and
non-negative integer weights on both locations and transitions,
enabling one to reason about quantitative aspects such as energy consumption,
response times, or resource utilisation under adversarial
conditions.

A central algorithmic task for WTGs is the \emph{Value Problem}: given
a two-player, turn-based WTG with a designated start configuration and
a rational threshold $c$, determine whether Player $\Minsf$ can
guarantee reaching a goal location with cumulative cost at most $c$,
despite best adversarial play by Player $\Maxsf$. This problem lies at
the heart of quantitative controller synthesis and performance
analysis for real-time systems.

Unfortunately, fundamental algorithmic barriers are well known. In
particular, the Value Problem is known to be undecidable for WTGs making use of three or more
clocks~\cite{BouyerEtAl:ValuePbWTG}. 
In fact, even approximating the
value of three-clock WTGs with arbitrary (positive and negative) weights is known to be
computationally unsolvable~\cite{guilmant2024}. On the positive side,
the Value Problem for WTGs making use of a single clock is decidable,
regardless of whether weights range over
$\mathbb{N}$ or
$\mathbb{Z}$~\cite{DBLP:conf/fsttcs/BouyerLMR06,monmege2024}. There is a voluminous
literature in this general area; for a comprehensive overview and
discussion of the state of the art, we refer the reader
to~\cite{DBLP:journals/lmcs/BusattoGastonMR23}. See also
Fig.~\ref{fig-landscape} in which we summarise some of the key
existing results.

\begin{figure}[H]
	\begin{center}
		\begin{tabular}{|cccc|}
		\toprule
		Clocks &Weights in& Value Problem & Existence Problem \\
		\midrule
		\multirow{2}{*}{$1$}&$\Nbb$ 
			& decidable \cite{DBLP:conf/fsttcs/BouyerLMR06}  &
			decidable \cite{BrihayeBruyereRaskin-OptTimedStrat}\\
		&$\Zbb$
			& decidable \cite{monmege2024} & \\
		\midrule
		\multirow{2}{*}{$2$}& $\Nbb$
			& \textcolor{blue}{\textbf{undecidable}} & \textcolor{blue}{\textbf{undecidable}}\\
		&$\Zbb$
			& \textcolor{blue}{\textbf{undecidable}} & undecidable \cite{brihaye2014}\\
		\midrule
		\multirow{2}{*}{$3+$}& $\Nbb$
			&undecidable~\cite{BouyerEtAl:ValuePbWTG} &	
			undecidable \cite{BouyerEtAl-ImprovedUndecWTG}\\[0.5ex]
	&{$\Zbb$}
		&inapproximable~\cite{guilmant2024}  &	undecidable\\
                  	\midrule
		\end{tabular}
	\end{center}

\caption{State of the art on the Value
  Problem for weighted
  timed games. Approximability for 2-clock WTGs, and 3-clock WTGs with
weights in $\Nbb$, remain open. This paper's main contribution
(undecidability for WTGs with two clocks and weights in $\Nbb$)
is highlighted in boldface blue.}
\label{fig-landscape}
\end{figure}

The case of WTGs with exactly two clocks (and non-negative weights)
has remained stubbornly open. Resolving this question is
essential, since two clocks suffice to encode most practical timing
constraints (e.g., deadline plus cooldown), and efficient single-clock
algorithms cannot in general be lifted to richer timing scenarios. The
main contribution of this paper is to close this gap by establishing undecidability:

\begin{restatable}{theorem}{thmUndec}\label{thm:undec}
The Value Problem for two-player, turn-based, time-bounded, two-clock,
weighted timed games with non-negative integer weights is undecidable.
The same holds for weighted timed games over unbounded time otherwise
satisfying the same hypotheses.
\end{restatable}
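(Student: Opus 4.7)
The plan is to reduce from the halting problem for deterministic two-counter Minsky machines. Given such a machine $M$, I would construct a two-clock WTG $G_M$ with non-negative integer weights and a rational threshold $c$ such that $M$ halts if and only if the value of $G_M$ does not exceed $c$; the time-bounded variant will then follow by rescaling the construction so that the full simulation fits into a bounded time window.

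The encoding I envisage represents the pair of counter values $(n_1,n_2)$ through the two clocks, in the spirit of the three-clock construction of~\cite{BouyerEtAl:ValuePbWTG} but with its third (reference) clock eliminated by offloading the reference role onto the accumulated weight. Concretely, I expect to encode each $n_i$ via a contractive map such as $x_i = \alpha^{n_i}$ for some fixed $\alpha\in(0,1)$, so that multiplication by $1/\alpha$ corresponds to a counter decrement. Each Minsky instruction is then simulated by a local gadget: $\Minsf$ chooses real-valued delays, and $\Maxsf$ chooses between several branches whose costs coincide only when $\Minsf$'s delays correspond to a faithful simulation; any cheating is strictly punished by $\Maxsf$'s best response.

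The hard part will be the decrement gadget, since this is precisely where~\cite{brihaye2014} exploited negative weights to ``subtract'' the previous counter encoding. To avoid negative weights, I would design an equilibrium gadget in which $\Minsf$ is forced to declare the post-decrement clock value, after which $\Maxsf$ is offered a binary choice whose two payoffs are non-negatively weighted affine functions of that declaration with opposite slopes; the unique declaration that balances both payoffs is the honest one, and any deviation is amplified in one of the two branches. Convexity of the optimal cost along delays, which is standard in WTGs, is what makes this ``virtual subtraction'' realisable without negative weights.

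Once the gadgets are in place, correctness follows by the usual two-sided argument: a halting run of $M$ yields a $\Minsf$ strategy achieving the threshold cost, whereas a non-halting computation forces $\Minsf$ either to cheat, and thereby be punished, or to simulate faithfully forever and hence never reach the goal. For the time-bounded variant, the $k$-th simulated instruction is fitted into a time slot of length $2^{-k}$ and its weights rescaled accordingly, so that the whole simulation lives within a bounded horizon while still allowing arbitrarily long halting executions. Everything else being standard adaptation of the Minsky encoding to the WTG formalism, the overall reduction hinges almost entirely on engineering the decrement gadget under the non-negative weight constraint.
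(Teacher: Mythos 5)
There is a genuine gap, and it lies precisely in the encoding, not in the decrement gadget. You propose to store one counter per clock, $x_i=\alpha^{n_i}$, with the missing third (reference) clock "offloaded onto the accumulated weight". This does not work: the accumulated weight cannot appear in guards and cannot be read back into a clock, so it cannot play the synchronising role of a reference clock. With both clocks carrying data, any delay $d$ advances both encodings simultaneously and any reset destroys one of them, so there is no way to update $x_1$ from $\alpha^{n_1}$ to $\alpha^{n_1\pm1}$ while preserving $x_2=\alpha^{n_2}$ — the wrap-around trick of the three-clock construction is exactly what you no longer have room for. The paper's central idea, absent from your proposal, is to pack \emph{both} counters and the step count into a \emph{single} clock via $x=1-\frac{1}{2^c3^d5^n}$ (with $y=0$ on entry to each simulated state), freeing the second clock to serve purely as a timer: each instruction is simulated by waiting a prescribed fraction $\gamma(1-x)$ with $\gamma\in\{\frac25,\frac35,\frac45,\frac9{10},\frac{14}{15}\}$, and the Counter-Evolution-Control gadget lets $\Maxsf$ choose between two branches whose non-negative affine costs have opposite slopes in the deviation, balancing exactly at the honest delay. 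Your "equilibrium gadget with opposite slopes" intuition is the right mechanism for punishing dishonesty without negative weights, but it is applied to the wrong invariant; moreover, under the Gödel packing, zero-tests become the hard part (one must certify that $\frac{1}{2^c3^d5^n}$ has, or lacks, a factor of $2$ or $3$), which the paper handles with iterated multiplication-control modules — a substantial piece of machinery your sketch does not anticipate.

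A secondary but real problem is your treatment of time-boundedness: fitting the $k$-th instruction into a slot of length $2^{-k}$ "with its weights rescaled accordingly" is not implementable, since a WTG has finitely many fixed integer weights and cannot rescale rates per simulation step. In the paper, boundedness comes for free from the encoding itself: $x$ increases monotonically towards $1$ and is never reset during the simulation, so the whole play fits within about three time units with constant weights, the per-step cost contributions shrinking geometrically because the delays do. Finally, note that the paper's reduction runs in the opposite direction from yours (non-halting yields value at most $61$, halting yields value at least $61+\frac{11}{12\times 30^{5N}}$), because $\Minsf$ always retains a cheap exit module; either orientation suffices for undecidability, but your claim that faithful infinite simulation yields infinite cost would need $\Minsf$ to be denied such an exit, which in turn makes the "punish every cheat" argument more delicate.
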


Our reduction is from the Halting Problem for deterministic two-counter machines and proceeds via a careful encoding of counter values in clock valuations, combined with ``punishment'' gadgets that enforce faithful simulation or allow the adversary to drive the accumulated cost upwards. Key technical novelties include:

\begin{itemize}
\item Counter-Evolution Control ($\CCE$) modules, which enforce precise proportional delays for encoding the incrementation and decrementation of counters.
\item Multiplication-Control ($\CM$) gadgets, which enable the adversary to verify whether simulated counter updates correspond to exact multiplication factors.
\end{itemize}

Together, these constructions fit within the two-clock timing structure and utilize only non-negative integer weights, thereby demonstrating that even the two-clock fragment — previously the only remaining decidability candidate — admits no algorithmic solution to the Value Problem. As a complementary result, we also show that the related Existence Problem (i.e., does $\Minsf$ have a strategy to achieve a cost at most $c$?) is undecidable under the same hypotheses.

Finally, we note that our reduction is implemented via WTGs with bounded duration by construction. This is particularly noteworthy given that many algorithmic problems for real-time and hybrid systems, which are known to be undecidable over unbounded time, become decidable in a time-bounded setting. See, for example, \cite{DBLP:conf/concur/OuaknineRW09,DBLP:conf/icalp/OuaknineW10,DBLP:conf/lics/JenkinsORW10,DBLP:conf/icalp/BrihayeDGORW11,DBLP:conf/atva/Brihaye0GORW13}.




\section{Weighted Timed Games}
\label{sec:wtg}
Let $\mathcal{X}$ be a finite set of \textbf{clocks}. \textbf{Clock constraints} over $\mathcal{X}$ are
	expressions of the form $x \mathrel{\sim} n$ or $x-y
        \mathrel{\sim} n$, where $x,y\in\mathcal{X}$ are clocks,
        ${\sim}\in\{<,\leq,=,\geq,>\}$ is a comparison symbol, and
        $n\in\mathbb{N}$ is a natural number.  We write $\mathcal{C}$ to denote
        the set of all clock constraints over $\mathcal{X}$.
A \textbf{valuation} on $\mathcal{X}$ is  a function $\nu:\mathcal{X}\to\mathbb{R}_{\geq 0}$. 
	For $d\in\mathbb{R}_{\geq 0}$ we denote by $\nu+d$ the valuation such
        that, for all clocks $x \in \mathcal{X}$, $(\nu+d)(x)=\nu(x)+d$. Let
        $X \subseteq \mathcal{X}$ be a subset of all clocks. We write
        $\nu[X :=0]$ for the valuation such that, for all clocks $x
        \in X$, $\nu[X :=0](x) = 0$, and $\nu[X:=0](y) = \nu(y)$ for
        all other clocks $y \notin X$.
	For a set $C \subseteq \mathcal{C}$ of clock constraints over $\mathcal{X}$, we say that
        the valuation $\nu$ \textbf{satisfies} $C$, denoted $\nu \models
        C$, if and only if all the comparisons in $C$ hold
        when replacing each clock $x$ by its corresponding value $\nu(x)$.

\begin{definition}
	A \textbf{(turn-based) weighted timed game} is given by a
        tuple $\mathcal{G}=$\linebreak $(L_\mathsf{Min}, L_\mathsf{Max},G,\mathcal{X},T,w)$, where:
	\begin{itemize}
        \item $L_\mathsf{Min}$ and $L_\mathsf{Max}$ are the (disjoint)
          sets of \textbf{locations} belonging to Players $\mathsf{Min}$ and
          $\mathsf{Max}$ respectively; we let $L = L_\mathsf{Min} \cup
          L_\mathsf{Max}$ denote the set of all locations. (In drawings,
          locations belonging to $\mathsf{Min}$ are depicted by blue
          circles, and those belonging to $\mathsf{Max}$ are depicted
          by red squares.)
          \item $G\subseteq L_{\mathsf{Min}}$ are the \textbf{goal locations}.
		\item $\mathcal{X}$ is a set of clocks.
		\item $T\subseteq (L \setminus G) \times2^\mathcal{C}\times 2^\mathcal{X}\times
                  L$ is a set of \textbf{(discrete) transitions}. A
                  transition $\ell \xrightarrow{C,X} \ell'$
                  enables moving from location $\ell$ to
                  location $\ell'$, provided all clock constraints in
                  $C$ are satisfied, and afterwards resetting all clocks in $X$
                  to zero. 
		\item $w:(L \setminus G) \cup T\to \mathbb{Z}$ is a \textbf{weight function}.
                \end{itemize}
In the above, we assume that all data (set of locations, set of
clocks, set of transitions, set of clock constraints) are finite.             
\end{definition}

          Let $\mathcal{G}= (L_\mathsf{Min}, L_\mathsf{Max},G,\mathcal{X},T,w)$ be
          a weighted timed game.  A \textbf{configuration} over
          $\mathcal{G}$ is a pair $(\ell,\nu)$, where $\ell\in L$ and $\nu$
          is a valuation on $\mathcal{X}$. Let $d \in \mathbb{R}_{\geq 0}$ be a
          \textbf{delay} and $t = \ell \xrightarrow{C,X} \ell' \in T$
          be a discrete transition. One then has a valid \textbf{delayed
            transition} (or simply a \textbf{transition} if the
          context is clear)
          $(\ell,\nu) \xrightarrow{d,t} (\ell',\nu')$ provided that
          $\nu+d \models C$ and $\nu' = (\nu+d)[X := 0]$. Intuitively,
          control remains in location $\ell$ for $d$ time units, after
          which it transitions to location $\ell'$,
          resetting all the clocks in $X$ to zero in the process.  The
          \textbf{weight} of such a delayed transition is
          $d \cdot w(\ell) + w(t)$, taking account both of the time
          spent in $\ell$ as well as the weight of the discrete
          transition $t$.

          As noted in~\cite{DBLP:journals/lmcs/BusattoGastonMR23},
          without loss of generality one can assume that no
          configuration (other than those associated with goal
          locations) is deadlocked; in other words, for any location
          $\ell \in L\setminus G$ and valuation
          $\nu \in \mathbb{R}_{\geq 0}^{\mathcal{X}}$, there exists
          $d \in \mathbb{R}_{\geq 0}$ and $t \in T$ such that
          $(\ell,\nu) \xrightarrow{d,t} (\ell',\nu')$.\footnote{This
            can be achieved by adding unguarded transitions to a
            sink location for all locations controlled by $\mathsf{Min}$ and
            unguarded transitions to a goal location for the ones
            controlled by $\mathsf{Max}$ (noting that in all our constructions,
            $\mathsf{Max}$-controlled locations always have weight $0$).
            Nevertheless, in the interest of clarity we omit such
            extraneous transitions and locations in our representation of WTGs; we merely assume instead that neither player allows him-
            or herself to end up in a deadlocked situation, unless a
            goal location has been reached.}

          Let $k\in\mathbb{N}$.  A \textbf{run} $\rho$ of length $k$ over $\mathcal{G}$ from a given configuration
          $(\ell_0,\nu_0)$ is a sequence of matching delayed
          transitions, as follows:
          \[ \rho = (\ell_0,\nu_0) \xrightarrow{d_0,t_0}
            (\ell_1,\nu_1) \xrightarrow{d_1,t_1} \cdots
            \xrightarrow{d_{k-1},t_{k-1}}  (\ell_k,\nu_k) \, . \]
          The \textbf{weight} of $\rho$ is the
          cumulative weight of the underlying delayed transitions:
          \[ \mathsf{weight}(\rho) = \sum_{i=0}^{k-1} (d_i\cdot
            w(\ell_i) + w(t_i)) \, . \]
          An infinite run $\rho$ is defined in the obvious way;
          however, since no goal location is ever reached, its weight is defined to be infinite:
          $\mathsf{weight}(\rho)=+\infty$.

          A run is \textbf{maximal} if it is either infinite or cannot be extended
          further. Thanks to our deadlock-freedom assumption, finite maximal
          runs must end in a goal location. We refer to maximal runs
          as \textbf{plays}.

          We now define the notion of \textbf{strategy}. Recall
          that locations of $\mathcal{G}$ are partitioned into sets $L_{\mathsf{Min}}$
          and $L_{\mathsf{Max}}$, belonging respectively to Players
          $\mathsf{Min}$ and $\mathsf{Max}$.
          Let Player $\mathsf{P} \in
          \{\mathsf{Min},\mathsf{Max}\}$, and 
write $\mathcal{FR}_{\mathcal{G}}^{\mathsf{P}}$
          to denote the collection of all non-maximal finite runs of $\mathcal{G}$ 
          ending in a location belonging to
          Player $\mathsf{P}$. A \textbf{strategy}
         for Player $\mathsf{P}$ is a mapping
           $\sigma_{\mathsf{P}} : \mathcal{FR}_{\mathcal{G}}^{\mathsf{P}}
           \to \mathbb{R}_{\geq 0} \times T$ such that for all 
           finite runs $\rho \in \mathcal{FR}_{\mathcal{G}}^{\mathsf{P}}$
           ending in configuration $(\ell,\nu)$ with
           $\ell \in L_{\mathsf{P}}$, the delayed transition
           $(\ell,\nu) \xrightarrow{d,t} (\ell',\nu')$ is valid,
           where $\sigma_{\mathsf{P}}(\rho)=(d,t)$ and $(\ell',\nu')$
           is some configuration (uniquely determined by
           $\sigma_{\mathsf{P}}(\rho)$ and $\nu$).

Let us fix a starting configuration $(\ell_0,\nu_0)$, and let
$\sigma_{\mathsf{Min}}$ and $\sigma_{\mathsf{Max}}$ be strategies for
Players $\mathsf{Min}$ and $\mathsf{Max}$ respectively (one speaks of
a \emph{strategy profile}). We write
$\mathsf{play}_\mathcal{G}((\ell_0,\nu_0),\sigma_{\mathsf{Min}},\sigma_{\mathsf{Max}})$
to denote the unique maximal run starting from configuration
$(\ell_0,\nu_0)$ and unfolding according to the strategy profile
$(\sigma_{\mathsf{Min}},\sigma_{\mathsf{Max}})$: in other words,
for every strict finite prefix $\rho$ of
$\mathsf{play}_\mathcal{G}((\ell_0,\nu_0),\sigma_{\mathsf{Min}},\sigma_{\mathsf{Max}})$
in $\mathcal{FR}_{\mathcal{G}}^{\mathsf{P}}$, the delayed transition
immediately following $\rho$ in
$\mathsf{play}_\mathcal{G}((\ell_0,\nu_0),\sigma_{\mathsf{Min}},\sigma_{\mathsf{Max}})$
is labelled with $\sigma_{\mathsf{P}}(\rho)$.

Recall that the objective of Player $\mathsf{Min}$ is to reach a goal
location through a play whose weight is as small possible. Player
$\mathsf{Max}$ has an opposite objective, trying to avoid goal
locations, and, if not possible, to maximise the cumulative weight of
any attendant play. This gives rise to the following two symmetrical definitions:
\begin{align*}
\overline{\mathsf{Val}}_\mathcal{G}(\ell_0,\nu_0) &=
  \inf_{\sigma_{\mathsf{Min}}} \left\{ \sup_{\sigma_\mathsf{Max}}
                                              \left\{
                                              \mathsf{weight}(\mathsf{play}_\mathcal{G}((\ell_0,\nu_0),\sigma_{\mathsf{Min}},\sigma_{\mathsf{Max}}))\right\}
  \right\} \mbox{\ and}\\
\underline{\mathsf{Val}}_\mathcal{G}(\ell_0,\nu_0) &=
  \sup_{\sigma_{\mathsf{Max}}} \left\{\inf_{\sigma_\mathsf{Min}}
                                               \left\{\mathsf{weight}(\mathsf{play}_\mathcal{G}((\ell_0,\nu_0),\sigma_{\mathsf{Min}},\sigma_{\mathsf{Max}}))
                                               \right\} \right\} \, .
\end{align*}

$\overline{\mathsf{Val}}_\mathcal{G}(\ell_0,\nu_0)$ represents the smallest 
possible weight that Player $\mathsf{Min}$ can possibly achieve,
starting from configuration $(\ell_0,\nu_0)$,
against best play from Player $\mathsf{Max}$, and conversely for
$\underline{\mathsf{Val}}_\mathcal{G}(\ell_0,\nu_0)$: the latter represents
the largest possible weight that Player $\mathsf{Max}$ can enforce,
against best play from Player $\mathsf{Min}$.\footnote{Technically
  speaking, these values may not be literally achievable; however
  given any $\varepsilon > 0$, both players are guaranteed to have
  strategies that can take them to within $\varepsilon$ of the optimal value.}
As noted in~\cite{DBLP:journals/lmcs/BusattoGastonMR23}, turn-based
weighted timed games are \emph{determined}, and therefore
$\overline{\mathsf{Val}}_\mathcal{G}(\ell_0,\nu_0) =
\underline{\mathsf{Val}}_\mathcal{G}(\ell_0,\nu_0)$ for any starting
configuration $(\ell_0,\nu_0)$; we denote this common value by
$\mathsf{Val}_\mathcal{G}(\ell_0,\nu_0)$.

We can now state:

\begin{definition}[Value Problem]
	Given a WTG $\Gcal$ with starting location $\ell_0$ and a
        threshold $c\in\Qbb$, the
        \textbf{Value Problem} asks whether 
	$\Value_\Gcal(\ell_0,\mathbf{0})\leq c$.
\end{definition}

The Value Problem differs subtly but importantly from the \emph{Existence Problem}:

\begin{definition}[Existence Problem]
	Given a WTG $\Gcal$ with starting location $\ell_0$ and a
        threshold $c\in\Qbb$, the
        \textbf{Existence Problem} asks whether 
	$\Minsf$ has a strategy $\sigma_\Minsf$ such that
	$$\sup_{\sigma_\Maxsf}
	\left\{
	\weight(\play_\Gcal((\ell_0,\mathbf{0}),\sigma_{\Minsf},\sigma_{\Maxsf}))\right\}\leq 
c\;.$$
\end{definition}

\begin{remark}
The Existence problem is undecidable for two-clock WTGs when arbitrary integer (positive and negative) weights are
allowed~\cite{brihaye2014}. 
\end{remark}

\section{Undecidability}
To establish undecidability, we reduce the Halting Problem for two-counter machines to the Value Problem.
A two-counter machine is a tuple $\mathcal{M}=(Q,q_i,q_h,T)$ 
where $Q$ is a finite set of states, $q_i,q_h\in Q$ are the initial and final state and $T\subseteq (Q\times\{c,d\}\times Q)\cup(Q\times\{c,d\}\times Q\times Q)$ is a set of transitions. 
A two-counter machine is deterministic if for any state $q$ there is at most one transition $t\in T$ which has $q$ as first component.
As its name suggests, a two-counter machine comes equipped with two counters, $c$ and $d$, which are variables with values in $\Nbb$. 
The semantics is as follows: a transition $(q,e,q')$ increases the
value of counter $e\in\{c,d\}$ by $1$ and moves to state $q'$. A
transition $(q,e,q',q'')$ moves to $q'$ if $e=0$ and to $q''$
otherwise. In the latter case, it also decreases the value of $e$ by $1$. 
The Halting Problem for (deterministic) two-counter machines is known to be undecidable 
(see \cite[Thm.~14-1]{Minsky-Computations}).

\subsection{Overview of the reduction}

Let $\mathcal{M}$ be a two-counter machine with counters $c$ and $d$. We construct a WTG $\Gcal_{\mathcal{M}}$ using two clocks, $x$ and $y$.
Player $\Minsf$ is responsible for simulating the behavior of $\mathcal{M}$, while $\Maxsf$ is given the opportunity to punish any incorrect simulation. Each punishment by $\Maxsf$ leads directly to the goal state, thereby terminating the game.

Our construction satisfies the following proposition:

\begin{restatable}{proposition}{propReduc}\label{prop:reduc}
Let $\mathcal{M} = (Q, q_i, q_h, T)$ be a deterministic two-counter machine.
\begin{itemize}
\item If $\mathcal{M}$ does not halt, then the WTG $\mathcal{G}_\mathcal{M}$, starting from the configuration $(q_i, \mathbf{0})$, has value at most $64$.
\item If $\mathcal{M}$ halts in at most $N$ steps, then $\mathcal{G}_\mathcal{M}$, starting from the configuration $(q_i, \mathbf{0})$, has value at least $64 + \frac{11}{12 \times 30^{5N}}$.
\end{itemize}
\end{restatable}

Intuitively, if $\mathcal{M}$ does not halt, then $\Minsf$ can faithfully simulate its infinite execution for an arbitrary number of steps. The longer she plays, the closer the accumulated weight is to $64$ when she eventually exits to a goal location.


On the other hand, if $\mathcal{M}$ halts in $N$ steps, $\Minsf$ has only two options: either simulate the execution faithfully for at most $N$ steps, and exit, yielding a value strictly greater than $64$, or attempt to cheat in order to make the game longer. The key point is that we show that in order to push the computation further, she will eventually cheat by some  quantity bounded from below depending on $N$. Detecting this, $\Maxsf$ is given the opportunity to enforce a punishment reaching a weight of at least $64 + \frac{11}{12 \times 30^{5N}}$.

In the encoding of $\mathcal{M}$ into $\Gcal_{\mathcal{M}}$, control states of $\mathcal{M}$ become locations of weight $30$ in $L_\Minsf$. When entering one of these locations, a faithful encoding of the counters $c$ and $d$ is represented by a clock valuation
\[x=1-\frac1{2^c3^d5^n}, \, y=0 \,,\]
where $n$ denotes the number of simulated steps of $\mathcal{M}$ so far.

Note that when $x = 1 - \frac{1}{2^c 3^d 5^n}$, reaching a configuration where $x = 1 - \frac{1}{2^{c+1} 3^d 5^{n+1}}$ (e.g., when simulating an increment of counter $c$, which also increments $n$) requires waiting for $\frac{9}{10}(1 - x)$ time units.
Similarly, to simulate a decrement of counter $c$, or an increment/decrement of counter $d$, or a simple increment of $n$, one must wait $\alpha \cdot (1 - x)$ time units, where $\alpha$ is one of
$
\frac35,\frac{14}{15},\frac25$ or $\frac45$,
respectively.

Hence, each increasing transition $(q, e, q') \in T$ is simulated using a module of the following structure: $\Minsf$ selects a delay to update $x$, and then $\Maxsf$ is given an opportunity to punish her if the delay is incorrect (see Fig. \ref{fig:teCount}
				\footnote{In this paper, we follow the conventions below to represent WTGs:
Blue circles represent locations controlled by $\Minsf$.
Red squares represent locations controlled by $\Maxsf$.
Green circles are goal locations.
Grey rectangles represent modules (i.e., subgames): a transition entering a module transfers control to its starting location. Modules may have outgoing edges.
 Orange rectangles provide the specification of modules.
 Numbers attached to locations denote their respective weights. Numbers in grey boxes attached to arrows denote the weight of the corresponding transition. Transitions without such boxes have weight $0$.
 Green boxes attached to transitions are comments (or assertions) on the values of the clocks, which hold upon taking the transition. These may be complemented by orange boxes, which represent assertions on the corresponding cost incurred.
Some locations are decorated with a numbered flag for ease of reference in proofs.}	
).
		\begin{center}
			\begin{figure}[h] 
			\centering
							\includegraphics{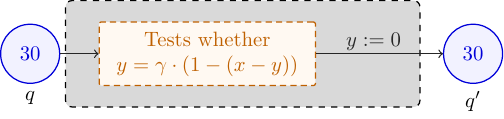}
				\caption{Transition module for increments. 
				Here $x-y$ is the previous value of $x$, and $y$ contains the delay $\Minsf$ waited in $q$.
				}\label{fig:teCount}
			\end{figure}
		\end{center}
A branching transition $(q, e, q_z, q_{nz}) \in T$, where $q, q_z, q_{nz} \in Q$ and $e \in \{c, d\}$, is simulated in two steps: first, $\Minsf$ decides whether to move toward $q_z$ or toward $q_{nz}$. In both cases, $\Maxsf$ is given the opportunity to punish if the encoding in $x$ is not valid with respect to $\Minsf$'s choice. Then, $\Minsf$ updates $x$, and $\Maxsf$ is again given the opportunity to punish her for an invalid update.

		\begin{center}
			\begin{figure}[H]
			\centering
				\includegraphics[scale=.85]{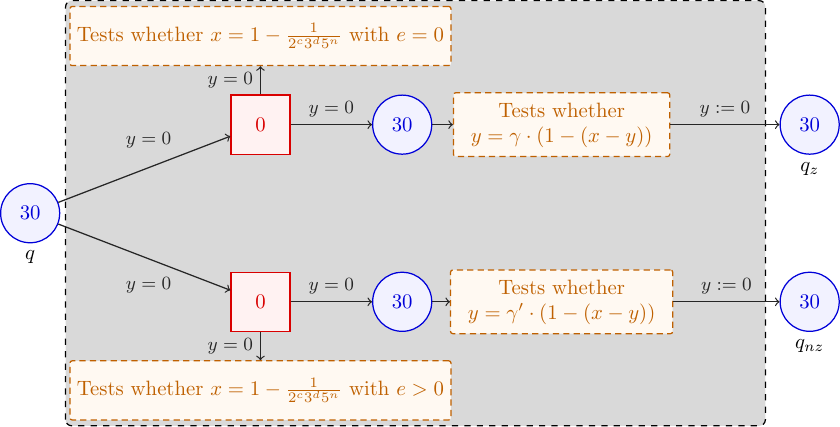}
				\caption{Transition module for branching decrement of counter $e\in\{c,d\}$.} \label{fig:teIf}
			\end{figure}
		\end{center}
		
		Finally, for every $q\in Q$ (including $q_h$), we add the following exit module for  $\Minsf$:
		\begin{center}
			\begin{figure}[H] 
			\centering
				\includegraphics{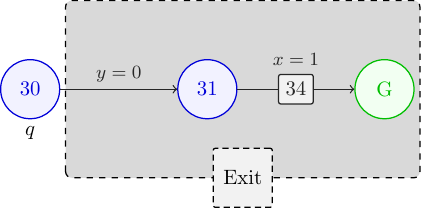}
				\caption{ $\Minsf$'s exit module. Yields a final cost of at most $64+ 1 - x$.}\label{fig:exit}
			\end{figure}
		\end{center}
		
		Note that, while faithfully simulating an infinite execution of $\Mcal$, $x$ will approach $1$ arbitrarily closely, since the encoding $\frac{1}{2^c 3^d 5^n}$ tends to $0$ as the number of steps $n$ increases. Therefore, when $\Minsf$ decides to exit, the accumulated cost will be
$30 x + 31(1-x) +34= 64 + 1-x$.
		

In Section \ref{sec-CEC}, we present the punishment module that enforces $y = \alpha(1 - x)$. Section \ref{sec-IfZero} then introduces punishment modules for cases where $\Minsf$ cheats on branching decisions. Finally, in Section \ref{sec-combining}, we combine these modules to construct $\Gcal_\Mcal$ and prove Proposition \ref{prop:reduc}.
\subsection{Controlling counter evolution}
\label{sec-CEC}

Let us introduce the gadget behind the punishment module ``Punish if $y \neq \alpha(1 - x)$'': the $\CCE$ (Counter Evolution Control) module. The module $\CCE_{\alpha,\beta}^{M}(x,y)$, depicted in Fig.~\ref{fig:CCE}, requires that $0 \leq y \leq x < 1$. We will denote the initial values of $x$ and $y$ as $a + b$ and $b$, respectively.

\begin{center}
	\begin{figure}[H]
	\centering
		\includegraphics{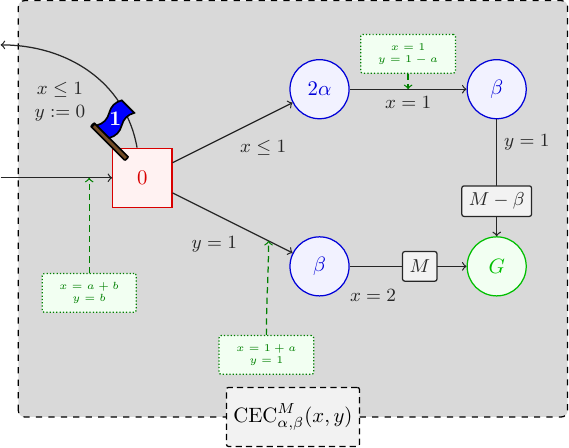}
		\caption{The CEC (Counter Evolution Control) module. Enforces 
$b=\gamma(1-a)$ with $\gamma = 1 - \frac\beta\alpha$.}
		\label{fig:CCE}
	\end{figure}
\end{center}

\begin{proposition}\label{lem:cce}\label{cor:cce1}
	Let $0\leq\beta<\alpha$. Let $t$ be the time $\Maxsf$ spends in state \includegraphics[scale=.5]{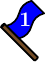} of $\CCE_{\alpha,\beta}^{M}(x,y)$.
	Provided that, upon entering
        the state
        \includegraphics[scale=.5]{standalones/flags/flag1.pdf} in
        $\CCE_{\alpha,\beta}^{M}(x,y)$, $(x,y)=(a+b,b)$ with $0\leq b\leq a+b \leq 1$, and assuming that the overall cost accumulated so far is $\alpha(a+b)+E$, then $\Maxsf$ can either end the game with a final cost of 
	\[
		\alpha\pa{1+\left|b-\pa{1-\f\beta\alpha}(1-a)\right|}+E + M
	\]
	or exit the module with $x=a+b+t$, $y=0$ and accumulated cost $\alpha(a+b+t) + (E-\alpha t)$.
\end{proposition}
\begin{proof}
$\Maxsf$ can either:
	\begin{enumerate}
		\item exit the module after waiting $t$ time units.
		\item take the upper path, which ends the game with cost
		\begin{align*}
		(\alpha-\beta)(1-a)-\alpha b -2\alpha t + \alpha+\beta+M-\beta+E\\
			= (\beta-\alpha)a - \alpha b - 2\alpha t +
                        2\alpha + M - \beta+E \, .\end{align*}
		\item take the lower path, which ends the game with cost
		\[\alpha b - (\alpha-\beta)(1-a) + \alpha+M+E = \alpha
                  b - (\beta-\alpha)a + \beta + M +E \, . \qedhere
                  \]
	\end{enumerate}
\end{proof}

Thus, if we take $\alpha=30$ and $\beta\in \left\{18, 12, 6, 3,
  2\right\}$, the value of $b$ minimising the cost of $\Maxsf$
reaching the goal location is indeed \[
b= \gamma(1-a) \text{ for }
	\gamma\in\left\{\f25, \f35, \f45, \f9{10}, \f{14}{15}\right\}\;.
\]

For $\alpha=30$, $E\leq 0$ and $M=34$, note that when $b= \gamma(1-a)$
the cost of $\Maxsf$'s punishment yields cost at most $64$, thus $\Maxsf$ has no incentive to punish $\Minsf$ when she has not cheated.

\subsection{Controlling whether a counter is zero}
\label{sec-IfZero}

We now address the case in which $\Minsf$ has moved to the wrong state when simulating a zero-test. To handle this, depending on the situation, $\Maxsf$ has access to either a control-if-zero ($\CZ$) or control-if-not-zero ($\CNZ$) module, which checks whether a counter is indeed zero or non-zero by forcing a series of multiplications that can reach 1 only if $\Minsf$ chose the right branch. We begin by presenting the module that controls the multiplication ($\CM$), depicted in Fig.~\ref{fig:CM}

\begin{center}
	\begin{figure}[H]
	\centering
		\includegraphics{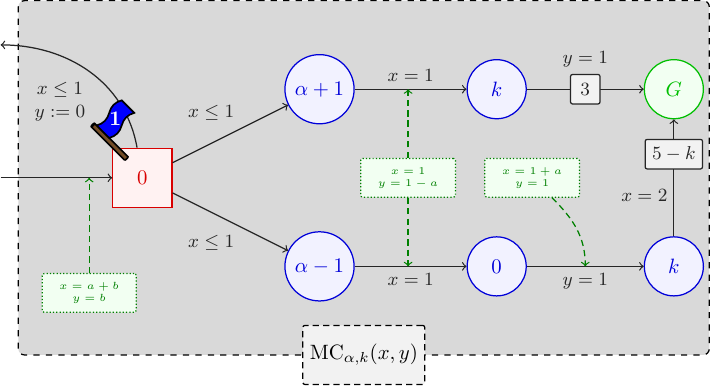}
		
		\caption{The $\CM$ (Multiplication Control) module. Enforces 
$x=ka$ for $k\in\{2,3,5\}$.}
		\label{fig:CM}
	\end{figure}
\end{center}

$\CM_{\alpha,k}(x,y)$ operates similarly to the $\CCE$ module:
\begin{itemize}
\item It requires that $0 \leq y \leq x < 1$. We will denote the initial values of $x$ and $y$ as $a + b$ and $b$, respectively.
\item Whereas the $\CCE$ module enforces that $b = \gamma(1 - a)$ for some $\gamma$, $\CM_{\alpha,k}(x,y)$ ensures that $a + b = k a$. This is because the $\CM$ module is designed to multiply encodings of the form $\frac{1}{2^c 3^d 5^n}$ by $k \in {2,3,5}$.
\end{itemize}


\begin{proposition}\label{lem:xm}\label{cor:cm1}
Let $0\leq\beta\leq\alpha$.
	Let $t$ be the time $\Maxsf$ spends in state \includegraphics[scale=.5]{standalones/flags/flag1.pdf} of $\CM_{\alpha,k}(x,y)$.
	Provided that, upon entering
        the state
        \includegraphics[scale=.5]{standalones/flags/flag1.pdf} in
        $\CM_{\alpha,k}(x,y)$, $(x,y)=(a+b,b)$ with $0\leq b\leq a+b \leq 1$, and assuming that the overall cost accumulated so far is $\alpha(a+b)+E$, 
	$\Maxsf$ can either end the game with final cost
	\[
	\alpha+4+E+k+\left|b- (k-1)a\right|
	\]
	or exit the module with $x=a+b+t$, $y=0$ and an accumulated cost of $\alpha(a+b+t) + (E-\alpha t)$.
\end{proposition}
\begin{proof}
$\Maxsf$ can either:
	\begin{enumerate}
		\item exit the module after waiting $t$ time units.
		\item take the upper path, which ends the game with cost
		$$\alpha(a+b)+E+(\alpha+1)(1-a-b-t)+ka+3
		=
                  \alpha + 4+ E -(\alpha+ 1)t-b + (k-1)a\,.$$
		\item take the lower path, which ends the game with cost 
		\begin{align*}
		\alpha(a+b)+E+(\alpha-1)(1-a-b-t)+k(1-a)+5 - k\\
		=
		\alpha + 4 + E - (\alpha-1)t + b + k - (k-1)a\,.  \qquad \qedhere
		\end{align*}
	\end{enumerate}
\end{proof}

Here we see clearly that the $\CM$ module enforces multiplication: the cost of $\Maxsf$ ending the game is minimised for  $b=(k-1)a$, hence $x=a+b= ka$.

We now introduce modules to control whether a counter is indeed $0$ or not.
\begin{center}
	\begin{figure}[H] 
	\centering
		\includegraphics[scale=.9]{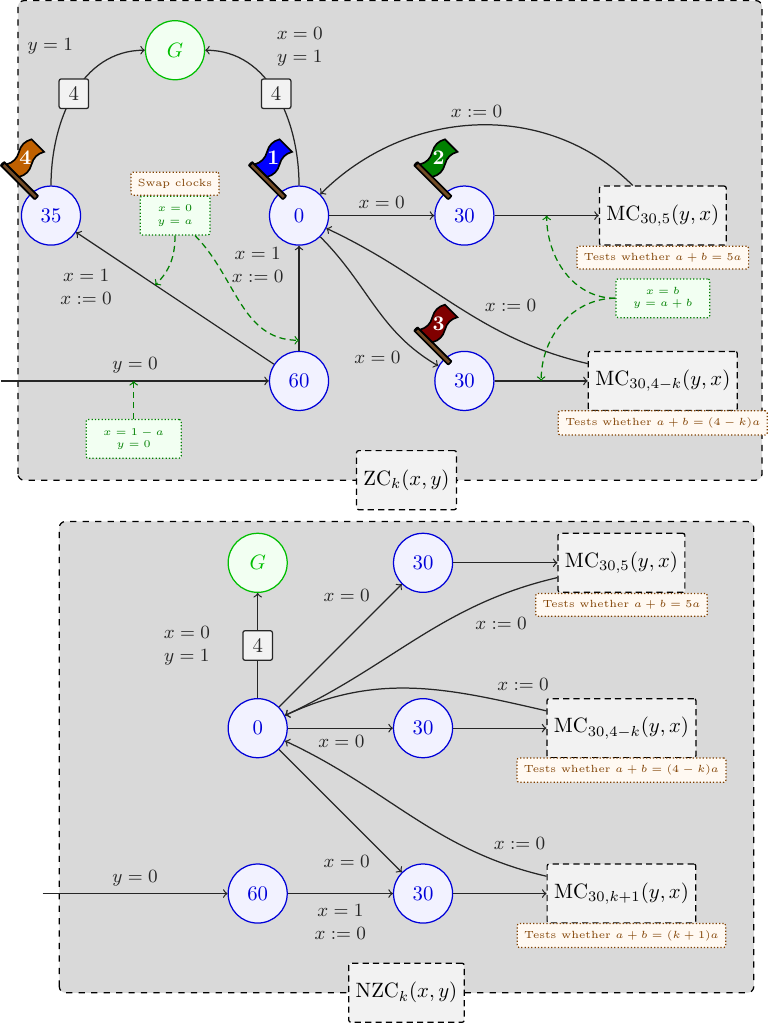}
		\caption{The Zero-Control and Non-Zero-Control modules, for $k\in\{1,2\}$.}\label{fig:cz}
	\end{figure}
\end{center}

In these two modules, note that $\CM$ is always invoked as $\CM(y,x)$, i.e., by swapping the roles of the two clocks.
This is because it is easier to translate the encoding $(1-a,0)$ into
$(0,a)$ rather than $(a,0)$. This translation is the first step in both modules.

\begin{restatable}{proposition}{propcz}\label{prop:cz}
	Let $k\in\{1,2\}$.
	Let $1-a\in\intfo01$ be the initial value of $x$ upon entering the module $\CZ_k(x,y)$. 
	Let $30(1-a)+E$ be the overall cost accumulated to date upon entering the module.
	Let 
	\[
		\mu=\min\left\{\left|\f1{(4-k)^d5^n}-a \right| \colon
                  d,n\in\Nbb\right\} \, .
	\]
	Then 
	\begin{itemize}
		\item $\Minsf$ has a strategy that ensures a final cost of at most $64+E+5\mu$.
		
		\item $\Maxsf$ has a strategy that ensures a final cost of at least $64+E+\mu$.
	\end{itemize}
\end{restatable}

\begin{proof}[Proof sketch]
(See the appendix for the detailed proof.)

In the $\CZ$ module, a valid encoding is any valuation for $y$ of the form $\f{1}{(4-k)^d 5^n}$ with $d,n \in \Nbb$.

If $\Minsf$ follows the strategy below, she ensures a final cost of at most $64 + E + 5\mu$:
\begin{itemize}
\item First, if $a$ is closer to $1$ than to $\f{1}{4-k}$ (the largest valid encoding), then $\Minsf$ takes the transition to State~\includegraphics[scale=.5]{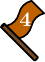}; otherwise, she moves to State~\includegraphics[scale=.5]{standalones/flags/flag1.pdf}.
\item From State~\includegraphics[scale=.5]{standalones/flags/flag1.pdf}, let $\f{1}{(4-k)^d 5^n}$ be the valid encoding closest to $a$. 
If $n \ge 1$, move to State~\includegraphics[scale=.5]{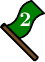} and wait until $y$ reaches $\f{1}{(4-k)^d 5^{n-1}}$.
Similarly, if $d \ge 1$, move to State~\includegraphics[scale=.5]{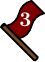} and wait until $y$ reaches $\f{1}{(4-k)^{d-1} 5^n}$.
Finally, if the closest valid encoding is $1$ but $y < 1$, wait in State~\includegraphics[scale=.5]{standalones/flags/flag3.pdf} until $y$ reaches $1$.
\item When $y = 1$ in State~\includegraphics[scale=.5]{standalones/flags/flag1.pdf}, exit to the goal location.
\end{itemize}

Under this strategy, $\Minsf$ always updates $y$ to a valid encoding.
Hence, any ``error'' in the $\CM$ module arises either from the initial deviation $\mu$ or from a delay introduced by $\Maxsf$ in the previous step.
If $\Maxsf$ punishes the initial deviation, the resulting cost is at most $64 + E + 5\mu$.
If $\Maxsf$ punishes a later deviation, the corresponding punishment cost is offset by the ``lost'' cost incurred by $\Maxsf$ while waiting in a zero-weight location in the previous step.
Indeed, when only $\Minsf$ delays, the accumulated weight is $30 + E + 30x$ for the current value of clock~$x$; any delay introduced by $\Maxsf$ subtracts from this $30x$. 
Therefore, by punishing his own mistakes, $\Maxsf$ can only obtain a total cost of at most $64 + E$.

Conversely, if $\Maxsf$ follows the strategy below, he ensures a final cost of at least $64 + E + \mu$:

Upon entering a $\CM$ module with clock valuation $(x,y)$, let 
$$\eta = \left|x - (k' - 1)(y - x)\right|\, ,$$ where $k' = 5$ or $4 - k$ depending on whether the module is entered from State~\includegraphics[scale=.5]{standalones/flags/flag2.pdf} or State~\includegraphics[scale=.5]{standalones/flags/flag3.pdf}.
Then, if $\eta \ge \mu$, then punish along the path in the $\CM$ module that maximises the cost.
 Otherwise, return to State~\includegraphics[scale=.5]{standalones/flags/flag1.pdf} without delay.

We claim that, in order to reach valuation $(0,1)$, $\Minsf$ must eventually incur an error of magnitude $\eta \ge \mu$ (see Claim $1$ in appendix), at which point $\Maxsf$ can punish her, resulting in a final cost of at least $64 + E + \mu$.
\end{proof}

\begin{proposition}\label{prop:cnz}
	Let $k\in\{1,2\}$.
	Let $1-a\in\intfo01$ be the initial value of $x$ upon entering module $\CNZ_k(x,y)$. 
	Let $30(1-a)+E$ be the overall cost accumulated thus far when entering the module.
	Let 
	\[
	\mu=\min\enstqcolon{\left|\f1{(k+1)^c(4-k)^d5^n}-a\right|}{c,d,n\in\Nbb\quad
          c>0} \, .
	\]
	Then 
	\begin{itemize}
		\item $\Minsf$ has a strategy that ensures a final cost of at most $64+E+5\mu$.
		
		\item $\Maxsf$ has a strategy that ensures a final cost of at least $64+E+\mu$.
	\end{itemize}
\end{proposition}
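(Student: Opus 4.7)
The proof proposal follows the structure of Proposition~\ref{prop:cz} closely, with the crucial modification that the allowable target values are now of the form $\tfrac{1}{(k+1)^c(4-k)^d5^n}$ with $c>0$. Accordingly, the $\CNZ_k^M(x,y)$ module (whose figure I am taking to refine $\CZ_k^M$ with an additional branch that invokes $\CM$ for the factor $k+1$) must force at least one multiplication by $k+1$ before the goal can be reached: for instance, by guarding the exit edge from~\includegraphics[scale=.5]{flag1.pdf} so that it can only be fired after the $(k+1)$-branch has been taken once, or equivalently by routing the first compulsory step through that branch. The constants should be chosen so that the Corollary~\ref{cor:cm1} calls yield the same $61 + M + E$ baseline as in Proposition~\ref{prop:cz}.

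For the first assertion, I let $c_p,d_p,n_p$ with $c_p>0$ minimise $\mu_p \overset{\text{def}}{=} \bigl|\tfrac{1}{(k+1)^{c_p}(4-k)^{d_p}5^{n_p}} - a_p\bigr|$, and define $\Minsf$'s strategy by the same case analysis as in Proposition~\ref{prop:cz}, augmented with a fourth case: if $c_p\geq 1$ and $d_p=n_p=0$ (or more generally whenever the choice of the $(k+1)$-branch is strictly better), then $\Minsf$ routes through the $\CM$ module for $k+1$, waiting $t_p = ka_p + (k+1)\delta_p$ so that $y$ becomes $\tfrac{1}{(k+1)^{c_p-1}(4-k)^{d_p}5^{n_p}}$ on entering $\CM$. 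The non-negativity of $t_p$ is established as before: if it were negative, a strictly smaller candidate value (with exponent $c_p-1$) would contradict minimality. The bookkeeping with $E_p = E - 30\sum_{q<p}\epsilon_q$ then works identically, and the telescoping $\mu_p \leq \epsilon_{p-1}$ (valid because $\Minsf$ always leaves the $\CM$ module with $y$ at a value in the allowable set) yields, via Corollary~\ref{cor:cm1}, a final cost bounded by $61+M+E+5\mu$ in every case. The direct-exit case (analogous to the flag~4 branch) is handled by the same threshold analysis on $1-a$, since the obligation $c>0$ does not affect that regime.

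For the second assertion, $\Maxsf$ again plays $\epsilon_p = 0$ and immediately ends the game as soon as some $\eta_p \geq \mu$ with the maximising path in~$\CM$; otherwise he accepts. If the game ever terminates with $a_P=1$, then we establish by downward induction on $p\in\{1,\dots,P-1\}$ the analogue of $(\Pscr_p)$, namely
\[
	\Bigl|\prod_{q=1}^{p}\tfrac{1}{k_{P-q}} - a_{P-p}\Bigr|
	< \mu\sum_{m=1}^{p}\prod_{q=m}^{p}\tfrac{1}{k_{P-q}}\,,
\]
where now $k_p \in \{k+1,\,4-k,\,5\}$ records which $\CM$ branch was used. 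The inductive step is identical to the one in Proposition~\ref{prop:cz}. At $p = P-1$, because each $\tfrac{1}{k_{P-q}} \leq \tfrac{1}{k+1} \leq \tfrac{1}{2}$, the geometric sum is strictly less than~$1$, so the left-hand side is strictly less than $\mu$. Since the module enforces at least one invocation of the $(k+1)$-branch, $\prod_{q=1}^{P-1}\tfrac{1}{k_{P-q}}$ is of the form $\tfrac{1}{(k+1)^c(4-k)^d5^n}$ with $c>0$, hence belongs to the set defining $\mu$, and we derive the contradiction $\mu \leq \prod \tfrac{1}{k_{P-q}} - a_1| < \mu$. The main obstacle is exactly the verification that $c>0$ is actually guaranteed along any Maxsf-acceptable play: this is a structural property of the $\CNZ$ module design that must be invoked here, and is what distinguishes $\CNZ$ from $\CZ$ in an otherwise parallel proof.
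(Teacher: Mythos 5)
Your proposal is correct and follows essentially the same route as the paper, whose own proof of Proposition~\ref{prop:cnz} is simply the observation that the argument of Proposition~\ref{prop:cz} carries over verbatim once one notes that $\CNZ_k^M$ forces $\Minsf$ to perform a multiplication by $k+1$ before multiplying by $k+1$, $4-k$, and $5$ arbitrarily many times — exactly the structural point you isolate at the end. Your write-up is a faithful (and considerably more detailed) expansion of that one-line reduction.
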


\begin{proof}
	The proof is almost identical to that of Prop.~\ref{prop:cz}.
	The main difference between modules $\CZ_k$ and $\CNZ_k$ 
	is that $\CNZ_k$ forces $\Minsf$ to do a multiplication by $k+1$ before multiplying by $k+1$, $4-k$, and $5$ arbitrarily many times.
\end{proof}

\subsection{Combining modules}
\label{sec-combining}
%
%
%
%
%

We can now implement explicitly the modules given in Figures \ref{fig:teCount} and \ref{fig:teIf} with the 
$\CCE,\CZ$ and $\CNZ$ modules:
		\begin{center}
			\begin{figure}[H] 
			\centering
				\includegraphics{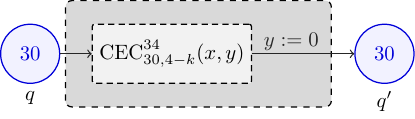}
				\caption{Transition module simulating an increment transition $(q,e,q')\in T$, with $q,q'\in Q$. Here $k=1$ or $2$ when $e=c$ or $d$, respectively.}

			\end{figure}
		\end{center}
		\begin{center}
			\begin{figure}[H]
			\centering
				\includegraphics[scale=.9]{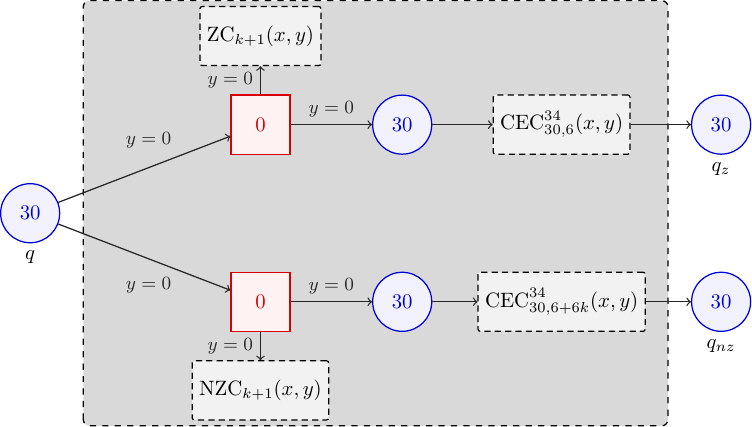}
				\caption{Transition module simulating a branching transition
				$(q,e,q_z,q_{nz})\in T$, with $q,q_z,q_{nz}\in Q$. Here $k=1$ or $2$ when $e=c$ or $d$, respectively.}

			\end{figure}
		\end{center}
\propReduc*
\begin{proof}[Proof sketch]
(See the appendix for the detailed proof.)

If $\Mcal$ does not halt, then $\Minsf$ can faithfully simulate the execution of $\Mcal$ for an arbitrary number of steps and then exit via the exit module with $x$ arbitrarily close to $1$. The resulting accumulated cost is at most $64+1-x$, which tends to $64$ as $x\to1$.
If $\Maxsf$ decides to punish $\Minsf$ during this run, the punishment module yields a cost of at most $64$.

Note that $\Maxsf$ can perturb the simulation by waiting in the transition modules; however, this is disadvantageous for $\Maxsf$. Let $a_p$ and $a_{p+1}$ be the encodings of the counters of $\Mcal$ at steps $p$ and $p+1$. Assume that $\Minsf$ has just updated $x$ to $a_p$, with accumulated cost at most $30a_p$, and that $\Maxsf$ waits for some $\epsilon_p>0$. Then:
\begin{itemize}
\item If $a_p+\epsilon_p\le a_{p+1}$, $\Minsf$ can update $x$ to $a_{p+1}$ with accumulated cost at most $30(a_{p+1}-\epsilon_p)<30a_{p+1}$. In effect, $\Maxsf$ has waited $\epsilon_p$ in a location of weight $0$ instead of allowing $\Minsf$ to realize the same delay in her location of weight $30$, strictly decreasing the total cost. If $\Maxsf$ then punishes her, the extra cost of the $\CCE$ module will be offset by the time $\Maxsf$ wasted waiting $\epsilon_p$ in a weight-$0$ location.
\item If $a_p+\epsilon_p>a_{p+1}$, $\Minsf$ can no longer update $x$ to the next encoding. In that case she can take the exit module, yielding an accumulated cost
\[
64-30\epsilon_p+1-(a_p+\epsilon_p)\le 64.
\]
The extra cost of leaving early is offset by the time $\Maxsf$ wasted waiting $\epsilon_p$ in a weight-$0$ location.
\end{itemize}

Conversely, if $\Mcal$ halts in at most $N$ steps, consider the following strategy for $\Maxsf$: never wait in the transition modules (as we have just seen, waiting is not to his advantage), and punish $\Minsf$ whenever she makes an error of at least $\f1{30^{5N+1}}$ while updating the clocks, or when she takes the wrong transition in a zero-test. Then:
\begin{itemize}
\item If $\Minsf$ faithfully simulates the execution of $\Mcal$, she is forced to exit in at most $N$ simulation steps, with accumulated cost
$
\ge 64+\f{11}{12\times 30^{5N}}
$.
\item To avoid this outcome, she may attempt to cheat to obtain an arbitrarily long run. However, she has only $N$ simulation steps in which to distribute her cheating. 
Hence we claim that either she makes an error of at least $\f1{30^{5N+1}}$ while updating the clocks, or she cheats on a zero-test when the value of clock $x$ is within $\f1{12\times 30^{5N}}$ of a faithful encoding of the simulation (see Claim $2$ in appendix). In either case, $\Maxsf$ can punish her, yielding a final accumulated cost
$
\ge 64+\f{11}{12\times 30^{5N}}
$.\qedhere
\end{itemize}
\end{proof}

\thmUndec*

\begin{proof}
  Let $\mathcal{M}$ be a deterministic two-counter machine. Note that
  in $\mathcal{G}_\mathcal{M}$, outside of control modules, clock $x$ is never reset and always
  upper-bounded by $1$. Therefore any play has duration at most $1$
  time unit plus the total time spent in the control modules,
  which is at most $2$ time units. In other words,
  by construction the WTG $\mathcal{G}_\mathcal{M}$ requires at most $3$ time units
  for any execution. Prop.~\ref{prop:reduc} asserts that the halting
  problem for a deterministic two-counter machine reduces to the Value problem
  for $2$-clock weighted timed games, which concludes the proof.
\end{proof}

\begin{theorem}
The Existence Problem for two-player, turn-based, time-bounded, two-clock,
weighted timed games with non-negative integer weights is undecidable.
The same holds for weighted timed games over unbounded time otherwise
satisfying the same hypotheses.
\end{theorem}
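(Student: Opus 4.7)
My plan is to adapt the reduction of Proposition~\ref{prop:reduc} via a minor structural change. Specifically, I will define a modified WTG $\Gcal'_{\Mcal}$ identical to $\Gcal_{\Mcal}$ except that $q_h$ is declared a goal location with weight $0$ and, consequently, its exit module is removed. All other components — the exit modules attached to states in $Q \setminus \{q_h\}$, the CEC, CZ, and CNZ gadgets, and all weights — will be preserved verbatim. I will then prove that $\Mcal$ halts if and only if Player $\Minsf$ has a strategy in $\Gcal'_{\Mcal}$ ensuring cost at most $61$ from $(q_i, \mathbf{0})$, yielding undecidability of the Existence Problem at threshold $c = 61$.

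For the forward direction (halting implies Existence), I will consider the faithful-simulation strategy $\sigma_{\Minsf}$ from the proof of Proposition~\ref{prop:reduc}: at each intermediate state $q_p$, $\Minsf$ enters the correct transition module after the precise delay $t_p$ (maintaining $E_p = 0$) and she never enters an exit module. Since $\Mcal$ halts, this drives play into $q_h$ — now a goal — in $N^*$ steps with accumulated cost $30\,a_{N^*} < 30$. I will then invoke Corollary~\ref{cor:cce1} and Propositions~\ref{prop:cz} and~\ref{prop:cnz} to argue that $\Maxsf$'s only meaningful alternative, namely ending the game inside a CEC, CZ, or CNZ module, produces cost exactly $61$ against honest play (substituting $\mu = 0$, $\alpha = 30$, and $M + \beta = 31$). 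Hence $\sup_{\sigma_{\Maxsf}} \weight = 61$, witnessing Existence at threshold $61$.

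For the backward direction (non-halting implies no witness), I will argue that for any strategy $\sigma_{\Minsf}$, one of three outcomes occurs against $\Maxsf$'s optimal response: (i) $\Minsf$ plays infinitely, producing weight $+\infty$; (ii) $\Minsf$ eventually exits via an exit module at some $q_p$, incurring cost at least $62 - a_p > 61$ (using $a_p < 1$ and that $\Maxsf$'s optimal choice sets his CEC delays to $0$ so as to keep $E_p = 0$); or (iii) $\Minsf$ cheats on a zero-test to reach $q_h$ via an invalid path, whereupon $\Maxsf$ invokes the appropriate CZ or CNZ module and secures cost at least $61 + \mu > 61$. In all cases $\sup_{\sigma_{\Maxsf}} \weight > 61$, so no strategy of $\Minsf$ achieves cost at most $61$.

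The hard part will be pinning down case (iii): showing that any cheating on a zero-test necessarily yields $\mu > 0$. This will follow from the fact that the valid counter encodings $1 - \frac{1}{2^c 3^d 5^n}$ are pairwise distinct rationals (so the clock value reached by cheating differs from any encoding consistent with $\Minsf$'s claimed branch by a strictly positive amount), combined with the quantitative penalty bounds of Propositions~\ref{prop:cz} and~\ref{prop:cnz}. Time-boundedness of $\Gcal'_{\Mcal}$ will follow immediately from that of $\Gcal_{\Mcal}$, since the modification introduces no new runs or timing constraints.
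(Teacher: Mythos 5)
Your modification (declaring $q_h$ a goal location) and the paper's (attaching to $q_h$ a ``soft-exit'' module whose location weight is $30$ instead of $31$, so that exiting there costs exactly $61+E_p$) serve the same purpose, and the overall scheme is the paper's: threshold $61$, faithful simulation for the forward direction, ``punish the first cheat'' for the backward direction. So the route is essentially the same, up to a cosmetic change at $q_h$.

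There is, however, a genuine gap in your backward direction. Your trichotomy (i)/(ii)/(iii) is not exhaustive, and the justification offered for (iii) does not stand alone. A strategy of $\Minsf$ may first cheat on the \emph{timing} --- choose a delay $t_p$ other than the prescribed one while still following the correct state sequence; such a deviation is punished through the $\CCE$ module via Cor.~\ref{cor:cce1}, not through $\CZ$ or $\CNZ$, and it appears nowhere in your case analysis. This omission is not harmless: your argument that a zero-test cheat forces $\mu>0$ (``the clock value reached by cheating differs from any encoding consistent with the claimed branch'') tacitly assumes the clock still carries a faithful encoding $1-\frac{1}{2^{c_p}3^{d_p}5^{p}}$ at the moment of the zero-test cheat. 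If $\Minsf$ first distorts the clock by a timing cheat to a value of the form $1-\frac{1}{3^{d}5^{n}}$ and only then claims $c=0$ while the true $c_p>0$, the $\CZ_1$ module yields $\mu=0$ and gives $\Maxsf$ nothing. The argument must therefore order the deviations: $\Maxsf$ punishes the \emph{first} cheat, which is either a timing deviation (punished in the $\CCE$ module at cost $61+30\eta>61$ for the deviation $\eta>0$) or a wrong branch taken while the encoding is still faithful --- and only in the latter situation does your distinctness argument apply. This is precisely what the paper's ``punish any cheating, no matter how small'' strategy accomplishes. A smaller inaccuracy: in the forward direction $\Minsf$ cannot literally ``never enter an exit module''; if $\Maxsf$ overshoots inside a $\CCE$ module, $\Minsf$ may be unable to wait a nonnegative $t_p$ to reach the next encoding and must fall back on the ordinary exit, as in Case~2 of her strategy in Prop.~\ref{prop:reduc}, where the accumulated deficit $E_p$ keeps the cost at most $61$.
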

\begin{proof}[Proof sketch]
Let $\mathcal{M}$ be a deterministic two-counter machine. We define $\Gcal^E_\Mcal$ similarly to $\Gcal_\Mcal$, except that we add a ``soft-exit'' module, reachable from the halting state of $\Mcal$.
This soft-exit module is simply the exit module where the location cost of $31$ has been replaced by $30$. Then
$\Minsf$ has a strategy to enforce a cost of at most $64$ if and only if $\mathcal{M}$ halts.

	Indeed, if $\mathcal{M}$ halts, then $\Minsf$ can reach a halting state without cheating
	(i.e., she faithfully simulates $\mathcal{M}$), and exits at
        cost $64$ through a soft-exit module. If $\Maxsf$ decides to exit through a $\CCE$ or $\CM$ module, this also yields a cost of at most $64$. As seen in Prop.~\ref{prop:reduc}, every delay taken by $\Maxsf$ in a $\CCE$ or $\CM$ module is a net negative for $\Maxsf$.
	On the other hand, if $\mathcal{M}$ does not halt, the only
        way for $\Minsf$ to exit the game through a soft-exit module
        is to cheat in order to reach a halting state. The normal exit
        module for $\Minsf$ yields cost strictly greater than
        $64$. Consider the strategy where $\Maxsf$ punishes any
        cheating by exiting the game. Then either
        $\Minsf$ never cheats, and the game never ends, thereby
        incurring cost $+\infty$, or $\Minsf$ cheats and is punished,
        in which case the game ends with cost strictly above $64$.
\end{proof}

\section{Commentary}

How does this reduction compare to the reduction to $3$-clock WTGs with positive weights in \cite{BouyerEtAl:ValuePbWTG}?
\begin{itemize}
\item First, in \cite{BouyerEtAl:ValuePbWTG}, the encoding is of the form $\frac{1}{2^c 3^d}$. Clocks $x$ and $y$ are used to store the previous and new counter encodings at each step (they alternate). A third clock, $t$, acts as a ticking clock, ensuring that exactly $n$ time units are spent in a module, for some $n \in \mathbb{N}$. This allows, for instance, $\Minsf$ to assign a new value to $y$ while keeping $x$ unchanged. Our encoding  $1-\frac{1}{2^c 3^d5^n}$ has the advantage that it only increases, allowing us to update it without losing the previous value, even without a ticking clock.
\item A key property of the $3$-clock construction is that it yields an \emph{almost non-Zeno} WTG $\Gcal_\Mcal$:

\begin{definition}
A WTG $\Gcal$ with non-negative weights is said to be \emph{non-Zeno} if all its cycles
\footnote{Here, we refer to cycles in the \emph{region game} of $\Gcal$; see \cite{BouyerEtAl:ValuePbWTG} for the definition.}
have weight at least $1$. It is said to be \emph{almost non-Zeno} if all its cycles have weight either exactly $0$ or at least $1$.
\end{definition}

Indeed, all cycles of $\Gcal_\Mcal$ lie in the part of the game simulating the execution of $\Mcal$; once we enter a punishment module, no further cycles are possible. In the three-clock reduction, the only positive weights occur in the punishment modules.

However, in our two-clock construction, $\Gcal_\Mcal$ is not almost non-Zeno. In fact, it was recently shown in \cite{vialard2025} that the Value problem is decidable for almost non-Zeno two-clock WTGs with positive weights.

Let us now give a simple intuition for why the two-clock reduction requires weight $> 0$ in each location of $Q \subseteq L_\Minsf$:
When entering a $\CCE$ module with clock configuration $(x, y) = (a + b, b)$, we want to offer $\Maxsf$ a punishment module whose weight function is of the form $|b - \gamma(1 - a)|$ (modulo some multiplicative and additive constants), for some $\gamma \in \mathbb{Q}$.

It is straightforward to construct a punishment module with weight functions of the form $k \cdot (1 - a - b)$ or $k \cdot a$ (for $k \in \mathbb{N}$), and to combine them using $\max$ and $+$ operations. However, we cannot obtain a function of the form $k \cdot b$ directly.
Therefore, $b$ must be reflected in the accumulated weight \emph{before} entering the punishment module. This can only happen by taking in weight when $\Minsf$ waits $b$ time units.

\item Our construction is timed-bounded, while the three-clock construction is not (every simulation step takes an integer amount of time).
This is due to our $1-\frac{1}{2^c 3^d5^n}$ encoding, but also to the constraint above: since $\Minsf$ accumulates weight while updating the clocks, we must ensure that the time spent on clock updates remains bounded.

\end{itemize}



\bibliographystyle{splncs04}
\bibliography{biblio}

\appendix

\section{Detailed proofs}

\propcz*

\begin{proof}[Detailed proof]
	Let $p\in\Nbb \setminus \{0\}$. We introduce the following quantities:
	\begin{itemize} 
		\item $a_p$ is the value of clock $y$ upon entering
                  \includegraphics[scale=.5]{standalones/flags/flag1.pdf}
                  for the $p$-th time (if it exists).
		
		\item $C_p$ is the accumulated cost so far upon
                  entering
                  \includegraphics[scale=.5]{standalones/flags/flag1.pdf}
                  for the $p$-th time (if it exists).
		
		\item $t_p$ is the time spent by $\Minsf$ in one of the states \includegraphics[scale=.5]{standalones/flags/flag2.pdf} or \includegraphics[scale=.5]{standalones/flags/flag3.pdf} upon leaving
		\includegraphics[scale=.5]{standalones/flags/flag1.pdf}
                for the $p$-th time (if it exists).
		
		\item $\epsilon_p$ is the time $\Maxsf$ waits in one
                  of the locations of $\CM$ controlled by him for the $p$-th time (if it exists).
		
		\item $E_p=C_p-30(1+a_p)$ is the difference between the actual and
                  expected costs.
                  \footnote{$30(1+a_p)$ is the accumulated cost in a run where $\Maxsf$ has never waited a delay $>0$.}
	\end{itemize}
	We have the following initial conditions:
	\[
		a_1=a 
		\qquad C_1=30(1+a) + E_1
		\qqandqq E_1=E\, .
	\]
	When everything is well defined we have the following recurrence relations:
	\[
		a_{p+1}=a_p+t_p+\epsilon_p
		\qquad C_{p+1}=C_p+30t_p
		\qqandqq E_{p+1} = E_p - 30\epsilon_p \, .
	\]
	Overall we have
	\[
		a_p = a + \Sum{q=0}{p-1}(t_q+\epsilon_q)
		\qquad E_p=E-30\pa{\Sum{q=0}{p-1}\epsilon_q}
		\qqandqq C_p= 30(1+a_p) + E_p  \, .
	\]
	\begin{itemize}
		\item Let us prove the first assertion of the proposition. 
		Let $d_p$ and $n_p$ be integers that minimise $\mu_p\overset{\text{def}}{=} |\delta_p|$ for 
		$\delta_p\overset{\text{def}}{=}
                \f1{(4-k)^{d_p}5^{n_p}}-a_p$. %
%
%
		Now consider the following strategy for $\Minsf$: 
		
		If 
		$\f12 \left( \f1{4-k}+1\right) \leq a \leq 1$, i.e. $a$ is closer to $1$ than to another valid encoding, then $\Minsf$ takes the transition to State~\includegraphics[scale=.5]{standalones/flags/flag4.pdf}. Otherwise $\Minsf$ moves to State~\includegraphics[scale=.5]{standalones/flags/flag1.pdf}. If $\Minsf$ is in \includegraphics[scale=.5]{standalones/flags/flag1.pdf} for the $p$-th time then:
		\begin{enumerate}
			\item\label{it:c2} If $n_p\geq1$ then go to state~\includegraphics[scale=.5]{standalones/flags/flag2.pdf} and wait $t_p=4a_p + 5\delta_p$. Doing so, $y$ has value $\f1{(4-k)^{d_p}5^{n_p-1}}$ upon entering the $\CM$ module.
						Note that $t_p\geq0$, since otherwise
\[
\f1{(4-k)^{d_p}5^{n_p}}<\f1{(4-k)^{d_p}5^{n_p-1}}<a_p
\]
			and thus 
			\[
				\left|\f1{(4-k)^{d_p}5^{n_p-1}}-a_p\right|<\mu_p
			\]
			which is a contradiction.

			\item\label{it:c1} If $n_p=0$ and $d_p\geq 1$ then go to State \includegraphics[scale=.5]{standalones/flags/flag3.pdf} and wait $t_p=(3-k)a_p + (4-k)\delta_p$. Doing so, $y$ has value $\f1{(4-k)^{d_p-1}}$ upon entering the $\CM$ module.
						Note that for the same reason as earlier, $t_p\geq0$.

			\item\label{it:c3} If $d_p=n_p=0$ and $a_p<1$, then go to state \includegraphics[scale=.5]{standalones/flags/flag3.pdf} and wait $1-a_p$.
			
			\item\label{it:c4} If $a_p=1$, then go to the goal state.
		\end{enumerate}
	Note that following this strategy, $\Minsf$ always selects
        $t_p$ in such a way that $a_p+t_p$ is of the form $\f1{(4-k)^d5^n}$. Hence if $p>1$ then 
	\[\epsilon_{p-1}\in\left\{\left|\f1{(4-k)^d5^n}-a_p\right|
            \colon d,n\in\Nbb\right\} \, .\]
	Therefore $\mu_p\leq\epsilon_{p-1}$. We then have the following alternatives:
	\begin{itemize}
		\item $\Minsf$ goes directly to state~\includegraphics[scale=.5]{standalones/flags/flag4.pdf}. In this case, we have $a\geq 
                \f12\pa{1+\f1{4-k}}$, which entails that $\mu=1-a$. Hence the final cost is
		\[
			30(1-a) + E + 60a + 35(1-a) + 4 = 64 +
                        E + 5(1-a)=  64 +
                        E + 5\mu\, .
		\]
		
			\item If the game ends via Case \ref{it:c4} of the above strategy then the final cost is exactly
		\[
		64+E_p=64+E-30\Sum{q=0}{p-1}\epsilon_q\leq
                64+E+5\mu \, .
		\]
	
			\item $\Maxsf$ decides to end the game  after Case \ref{it:c2} of the above strategy. If $p=1$, using Cor.~\ref{cor:cm1}, with optimal play from $\Maxsf$, the final cost is exactly
		\[
		64+E+5\mu \, .
		\]
		
		If $p>1$, using Cor.~\ref{cor:cm1}, with optimal play from $\Maxsf$, the final cost is exactly
		\begin{align*}
			64+E_p+5\mu_p 
			&= 64 + E - 30\Sum{q=0}{p-1}\epsilon_q + 5\mu_p\\
			&\leq 64 + E - 30\Sum{q=0}{p-2}\epsilon_q - 25\epsilon_{p-1}
			\text{ since $\mu_p\leq \epsilon_{p-1}$}\\
			&\leq 64+E\leq 64 + E +5\mu \, .
		\end{align*}
		\item Similarly, if $\Maxsf$ decides to end the game  after Case \ref{it:c1}, then the final cost is at most $64 + E +5\mu$

		\item If $\Maxsf$ decides to end the game  after Case~\ref{it:c3} of the above strategy, then
		first, note that in this case $p>1$.
		Indeed, in Case~\ref{it:c3}, $d_p = n_p=0$, which in
                turn entails that $\displaystyle{\frac{5-k}{8-2k}} \leq a_p <1$,
                i.e., $a_p$ is closer to $1$ than to $\displaystyle{\frac{1}{4-k}}$,
                which (for $p=1$) mandates moving to
                state~\includegraphics[scale=.5]{standalones/flags/flag4.pdf}
                instead of state~\includegraphics[scale=.5]{standalones/flags/flag1.pdf}.

   We can also note that this situation can only happen if $\Maxsf$ waited
   at least $\f{3-k}{8-2k}$ time units in the previous round. Indeed, following $\Minsf$'s strategy, we know that $a_{p-1}+t_{p-1}$ is of the form $\f1{(4-k)^d5^n}$ with either $d>0$ or $n>0$.
		Therefore $a_{p-1}+t_{p-1}\leq \f1{4-k}$ and 
		$a_p\geq \f{5-k}{8-2k}$, hence $\epsilon_{p-1}\geq 
		\f{3-k}{8-2k}$.
		
		This means that$
		|1-(4-k)a_p|\leq 3-k \leq 30\epsilon_{p-1}
		$. In other words, the weight lost by $\Maxsf$ by waiting $\epsilon_{p-1}$ more than compensates the punishment he can now apply.

	 Therefore, using Cor.~\ref{cor:cm1}, with optimal play from $\Maxsf$, the final cost is
		\begin{align*}
		64+E_p+|1-(4-k)&a_p|
			= 64 + E - 30\Sum{q=0}{p-1}\epsilon_q +
                        |1-(4-k)a_p|\\
            &\leq 64 + E - 30\Sum{q=0}{p-2}\epsilon_q 
			\leq 64 + E \leq
			64 + E +5\mu \, .
		\end{align*}

	\end{itemize}
	In all the above cases, $\Minsf$ ensures a cost of at most $64 + E +5\mu$.
	
	\item  Let us prove the second assertion of the proposition. Assume that $\Minsf$ has gone to State~\includegraphics[scale=.5]{standalones/flags/flag1.pdf} instead of \includegraphics[scale=.5]{standalones/flags/flag4.pdf}.
Let us write:
	\begin{itemize}
		\item $k_p=\begin{accolade}
			5& \text{if $\Minsf$ chooses
                          State~\includegraphics[scale=.5]{standalones/flags/flag2.pdf}
                        at step $p$}\\
			4-k& \text{if $\Minsf$ chooses
                          State~\includegraphics[scale=.5]{standalones/flags/flag3.pdf}
                        at step $p$}
                        \, .
		\end{accolade}$
		\item $\eta_p= \left|t_p-(k_p-1)a_p\right|$.
	\end{itemize}
	Consider the following strategy for $\Maxsf$: 
	\begin{enumerate}
		\item Always choose $\epsilon_p=0$.
		\item\label{it:c2max} If $\Minsf$ picks $t_p$ such that
		$\eta_p\geq\mu$, then punish her in the next $\CM$ module through the path maximising the cost.
		\item Otherwise, immediately accept and go back to state \includegraphics[scale=.5]{standalones/flags/flag1.pdf}.
	\end{enumerate}
\bigskip

\textit{Claim 1.}
 Either this game never ends, or there exists $p\in\mathbb{N}$ such that $\eta_p\leq \mu$.
\linebreak

			Using Cor.~\ref{cor:cm1}, if $\Maxsf$ ends the game in Case \ref{it:c2max}, he secures a final cost of 
	\[
		64+E_p+\eta_p=64+E+\eta_p\geq 64+E+\mu \, .\quad \qedhere
	\]

	\end{itemize}
\end{proof}


\begin{proof}[Proof of Claim 1]
	Assume that we always have~$\eta_p<\mu$, i.e., $\Maxsf$ never ends the game by himself.  Then we will show that $\Minsf$ can never achieve the condition $y=1$ needed to exit from State~\includegraphics[scale=.5]{standalones/flags/flag1.pdf}.
	For the sake of contradiction, assume that $\Minsf$ can exit the game at step $P$ for
        some $P\in\Nbb$, hence $a_P=1$. 
	Since $\epsilon_p=0$ for all $p$, and since $\Maxsf$ always
        takes the transition back to
        State~\includegraphics[scale=.5]{standalones/flags/flag1.pdf},
        we have $t_p=a_{p+1}-a_p$ for all $p$. Therefore $\eta_p=|a_{p+1}-k_pa_p|$.

	Let us consider the following property for $p\in\{1,\dots,P-1\}$:
	\[
		(\Pscr_p)\mathrel:\qquad
                \left|\Prod{q=1}{p}\f1{k_{P-q}}-a_{P-p}\right|<\mu\Sum{m=1}{p}\quad\Prod{q=m}{p}\f1{k_{P-q}}
                \, .
	\]
	We prove this property by induction on $p$.
	\begin{itemize}
		\item We have
                  $|1-k_{P-1}a_{P-1}|=|a_P-k_{P-1}a_{P-1}|=\eta_{P-1}<\mu$. This
                  can be rewritten  as
		\[
			\left|\f1{k_{P-1}}-a_{P-1}\right|<\f\mu{k_{P-1}}
                        \, ,
		\]
		which is exactly $(\Pscr_1)$.
		
		\item Assume that $(\Pscr_p)$ has been proven for some $p\in\{1,\dots,P-2\}$. We have
		\begin{align*}
			&\left|\Prod{q=1}{p}\f1{k_{P-q}} - k_{P-(p+1)} a_{P-(p+1)}\right| \\
				&\quad\leq \left|\Prod{q=1}{p}\f1{k_{P-q}} - a_{P-p}\right|+\left|a_{P-p}-k_{P-(p+1)}a_{P-(p+1)}\right|\\
				&\quad\leq \left|\Prod{q=1}{p}\f1{k_{P-q}} - a_{P-p}\right| + \eta_{P-p}\\
				&\quad<
                           \mu\pa{\Sum{m=1}{p}\quad\Prod{q=m}{p}\f1{k_{P-q}}}
                           + \mu \, .
		\end{align*}
		Thus
		\begin{align*}
			\left|\Prod{q=1}{p+1}\f1{k_{P-q}} - a_{P-(p+1)}\right|
				&< \mu\pa{\Sum{m=1}{p}\quad\Prod{q=m}{p+1}\f1{k_{P-q}}} + \f\mu{k_{P-(p+1)}}\\
				&=
                                \mu\Sum{m=1}{p+1}\quad\Prod{q=m}{p+1}\f1{k_{P-q}}
                                \, .
		\end{align*}
		This proves $(\Pscr_{p+1})$, which concludes our
                induction. 
	\end{itemize}

        We have
	\begin{align*}
		\left|\Prod{q=1}{P-1}\f1{k_{P-q}}-a_1\right|
		<\mu\Sum{m=1}{P-1}\quad\Prod{q=m}{P-1}\underbrace{\f1{k_{P-q}}}_{<\f1{4-k}}
			&\leq \mu\Sum{m=1}{P-1}\f1{(4-k)^{P-m}}\\
			&=
                        \mu\underbrace{\Sum{m=1}{P-1}\f1{(4-k)^{m}}}_{<\f1{3-k}<1}<\mu
                        \, .
	\end{align*}
	But we notice that 
	\[
	\left|\Prod{q=1}{P-1}\f1{k_{P-q}}-a_1\right| = \left|\Prod{q=1}{P-1}\f1{k_{P-q}}-a\right|
		\in \left\{\left|\f1{(4-k)^d5^n}-a\right| \colon
                  d,n\in\Nbb\right\} \, .
	\]
	Thus by definition of $\mu$, 
	\[
		\mu \leq \left|\Prod{q=1}{P-1}\f1{k_{P-q}}-a_1\right|
                <\mu \, , \qedhere
	\]
a contradiction. 

\end{proof}

\propReduc*
\begin{proof}[Detailed proof]
	We first introduce some key quantities.
	Let $p\in\Nbb \setminus \{0\}$. 
        We let $\left(q_p,c_p,d_p\right)_p$ be the unique sequence of
        states and values of the two counters along the
        (deterministic) execution of $\Mcal$. Here, for
        all $p$, $c_p+d_p\leq p$.  Recall that $\Lcal_\Minsf$ is a superset of $Q$.
We let $\left(\hat q_p\right)$ be the
        sequence of states in $Q$ visited along a given play of
        $\Gcal_\Mcal$. Let us write, upon entering a state $\hat q_p$:
	\begin{itemize} 
		\item $a_p$ to denote the value of clock $x$.
		
		\item $C_p$ for the accumulated cost so far.
		
		\item $E_p=C_p-30a_p$ for the difference between the actual and expected costs.
		
		\item
                  $\mu_p=\min\left\{\left|1-\f1{2^c3^d5^p}-a_p\right|
                    \colon c,d\in\Nbb \quad
                    c+d\leq p\right\}$ to denote the minimal difference between $a_p$ and a valid counter encoding.
		
		\item $\hat c_p$ and $\hat d_p$ to stand for natural numbers such that $\hat c_p+\hat d_p\leq p$ and 
		\[
		\left|1-\f1{2^{\hat c_p}3^{\hat
                      d_p}5^p}-a_p\right|=\mu_p \, .
		\]
		\item 
		$
		\delta_p = 1-\f1{2^{\hat c_p}3^{\hat d_p}5^p}-a_p 
		$.
	\end{itemize}
	We have the following initial conditions:
	\[
	a_1=0 \qquad \mu_1=\delta_1=0\qquad E_1=0\qqandqq C_1=0 \, .
	\]
	After entering a state $\hat q_p$, and until visiting the next state $\hat q_{p+1}$, $\Minsf$ can wait in a state of weight $30$
	\footnote{This state is either $\hat q_p$ itself, or the next visited location in $\Lcal_\Minsf$  when simulating a zero-test.}
	, immediately followed by a CEC where $\Maxsf$ can also wait.
	We let
$t_{p}$ be the time spent by $\Minsf$ before the $\CCE$ module is entered, and
$\epsilon_{p}$ be the time spent by $\Maxsf$ in his state of the $\CCE$ module before $\hat q_{p+1}$.

	When everything is well defined we have the following recurrence relations:
	\[
	a_{p+1}=a_p+t_{p}+\epsilon_{p}\qquad C_{p+1}=C_p+30t_{p}
	\qqandqq E_{p+1} = E_p - 30\epsilon_{p} \, .
	\]
	Overall we have
	\[
	a_p = \Sum{q=0}{p-1}(t_q+\epsilon_q)\qquad E_p=-30\Sum{q=0}{p-1}\epsilon_q
	\qqandqq C_p= 30a_p + E_p=30\Sum{q=0}{p-1}t_q \, .
	\]
	\begin{itemize}
		\item Let us prove the first assertion of the proposition. Assume that $\Mcal$ does not halt. Let
                  $N\in\Nbb$. Consider the following strategy for
                  $\Minsf$: upon entering $q_p$, if $1-\f1{30^N}<a_p$ or $1-\f1{2^{c_{p+1}}3^{d_{p+1}}5^{p+1}}< a_p$ then take the exit module.
Otherwise, $\Minsf$ plays in order to reach $q_{p+1}$, and thus she waits
			\[
			t_{p}=1-\f1{2^{c_{p+1}}3^{d_{p+1}}5^{p+1}}-a_p = \delta_p +\f1{2^{c_p}3^{d_p}5^p}-\f1{2^{c_{p+1}}3^{d_{p+1}}5^{p+1}}\;.
			\]
			In other words, $\Minsf$'s strategy is to simulate the execution of $\Mcal$ faithfully, until either $a_p$ is $\f1{30^N}$-close to $1$, in which case the total cost is
			\[
				64+E_p+1-a_p \leq 
                                 \f1{30^N} \, ,
			\] 
			or $\Maxsf$ has waited long enough in $\epsilon_{p-1}$ to prevent $\Minsf$ from reaching the next step.
			In this case, the weight lost by waiting $\epsilon_{p-1}$ is enough to compensate the cost of leaving for $\Minsf$: Indeed
			\begin{align*}\epsilon_{p-1} = a_p-a_{p-1}-t_{p-1} &= a_p -\left(1 - \f1{2^{c_p}3^{d_p}5^p}\right)\\
				&>\f1{2^{c_p}3^{d_p}5^p}-\f1{2^{c_{p+1}}3^{d_{p+1}}5^{p+1}}
                           \, ,\end{align*}
        hence  $30\epsilon_{p-1}  >\f1{2^{c_{p+1}}3^{d_{p+1}}5^{p+1}} >  1-a_p              $.
            The total cost of $\Minsf$ leaving is then               
                           			\[
				64+E_p+1-a_p <
                                64-30\Sum{q=0}{p-2}\epsilon_q\leq 64
                                 \, .
			\]

		
		Note that, along the game, $\Maxsf$ gains nothing by punishing $\Minsf$.
							Indeed, for any $p>0$ such that the run is defined up to $p$ steps,
		\[
		a_{p}=a_{p-1}+t_{p}+\epsilon_{p}\qqandqq
		a_{p-1}+t_{p-1}=1-\f1{2^{c_{p}}3^{d_{p}}5^{p}} \, .
		\]
		Therefore 
		\[
		\epsilon_{p-1}\in\enstqcolon{\left|1-\f1{2^c3^d5^{p}}-a_{p}\right|}{c,d\in\Nbb\quad
                  c+d\leq p-1} \, ,
		\]
		hence $\epsilon_{p-1}\geq\mu_{p}$.
		This means that any ``error'' on the encoding at step $p$ must have been introduced by $\Maxsf$ at step $p-1$.
		Thus, according to Prop. ~\ref{prop:cz}, \ref{prop:cnz} and
		\ref{cor:cce1}, the final weight is at most $64<64+\f1{30^N}$.

Overall, for any $N\in\Nbb$, $\Minsf$ can secure a cost of at most $64+\f1{30^N}$. Thus the value of the game $\Gcal_\Mcal$ is at most $64$.

		\item Let us prove the second assertion of the proposition. Assume that $\Mcal$ halts in $N$ steps. Consider the following strategy for $\Maxsf$:
		\begin{enumerate}
			\item Always choose $\epsilon_p=0$.
			\item Immediately punish a wrong transition
                          going to $\CZ_k^0$ or $\CNZ_k^0$ if $\Minsf$
                          unfaithfully simulates a zero-test.
			\item At step $p$,
			accept the transition if $\left|t_{p}-\pa{1-\f\beta{30}}(1-a_p)\right|<\f1{30^{5N+1}}$ and punish otherwise.
		\end{enumerate}
		Note that, in particular, $E_p=0$ for all $p$.
		
		Now,  if $\Minsf$ ends the game at
                          step $p\leq N$, the final cost is then
			\begin{align*}
				64+1-a_p=64+\delta_p+\f1{2^{c_p}3^{d_p}5^p}&\geq
                                64+\delta_p+\f1{5^{2N}}\\
                                &\geq
                                64+\f1{5^{2N}}-\f1{12\times
                                  30^{5N}}\\&\geq 64+\f{11}{12\times
                                  30^{5N}} \, .
			\end{align*}
	On the other hand, if $\Minsf$ wants to play more than $N$ steps, then she has to cheat. 		
	However, if	 $\Maxsf$ ends the game in Case~3 in a $\CCE_{30,\beta}^{34}$ module, then Cor.~\ref{cor:cce1} ensures that the final cost is
			\[
			64+E_p+30\left|t_{p}-\pa{1-\f\beta{30}}(1-a_p)\right|
                        \geq 64+\frac1{30^{5N}}\geq64+\frac{11}{12\times 30^{5N}}\,,
			\]
			since $\left|t_{p}-\pa{1-\f\beta{30}}(1-a_p)\right|\geq\f1{30^{5N+1}}$.
		
		However, avoiding this punishment limits $\Minsf$'s possibilities:

\bigskip
\textit{Claim 2.} 
		For any $p\leq N$,
                if the game runs at least $p$ steps, then the following property holds:
		\[
			(\Pscr_p)\mathrel: (\hat c_p, \hat
                        d_p)=(c_p,d_p)\wedge \mu_p\leq\f1{12\times
                          30^{5N}} \, .
		\]
\bigskip

		In other words, to avoid triggering $\Maxsf$'s punishment, $\Minsf$ has to stay $\f1{12\times
                          30^{5N}}$-close to a faithful simulation of  $\Mcal$'s execution.         
We prove this claim in the appendix.
		
		Staying close to a faithful simulation means that cheating on a zero-test simulation, which triggers Case~3, is costly for $\Minsf$. Indeed, if for instance\footnote{All other cases work similarly.},
		 $\Maxsf$ punishes her with a $\CNZ_1^0$ module while $c_p=0$, then using Prop.~\ref{prop:cnz}, $\Maxsf$ can secure a cost of 
					\begin{align*}
				&64+\min\enstqcolon{\left|\f1{2^c3^d5^n}-1+a_p\right|}{c,d,n\in\Nbb,0<c}\\
					&=64+\min\enstqcolon{\left|\f1{2^c3^d5^n}-\f1{3^{d_p}5^p}-\delta_p\right|}{c,d,n\in\Nbb,0<c}\\
					&\geq 64+\min\enstqcolon{\left|\f1{2^c3^d5^n}-\f1{3^{d_p}5^p}\right|}{c,d,n\in\Nbb,0<c}-\mu_p
                                   \, .
			\end{align*}

			Since $d_p\leq p$, we have
                        $2^{5p}>3^{3p}>5^{2p}>3^{d_p}5^p$, and thus
                        			\begin{align*}      			
				\min&\enstqcolon{\left|\f1{2^c3^d5^n}-\f1{3^{d_p}5^p}\right|}{c,d,n\in\Nbb,0<c}\\
				&= \min\enstqcolon{\left|\f1{2^d3^d5^n}-\f1{3^{d_p}5^p}\right|}{\begin{array}c
				c \in {1,\dots, 5p}\\
						d\in\{0,\dots,3p\}\\ n\in\{0,\dots,2p\}
				\end{array}} \, .
			\end{align*}
    Every elements in this last set are multiples of 
    $\f1{30^{5p}}$
                      and cannot be $0$ since $c>0$. Since
                      $p\leq N$, we can conclude that when $\Minsf$ cheats on a zero-test transition, 
			$\Maxsf$ can secure a cost of at least
			\[
				64+\f1{30^{5p}}-\f1{12\times
                                  30^{5N}}\geq64+\f{11}{12\times
                                  30^{5N}} \, . \qedhere
			\]

	\end{itemize}
\end{proof}



\begin{proof}
[Proof of Claim 2]

We prove $(\Pscr_p)$ by induction on $p$.
\begin{itemize}
			\item $(\Pscr_1)$ holds by definition.
			\item Assume that $(\Pscr_p)$ holds and that
                          state $q_{p+1}$ exists. Reaching $q_{p+1}$ entails that $\Minsf$ did
                          choose the correct transition in case of zero-tests and that she did not exit via the exit module. The run then went through a $\CCE_{30,\beta}^{34}$ module with $\beta\in \left\{18, 12, 6, 3,
  2\right\}$ such that
			\[
				\f\beta{30}\f1{2^{c_p}3^{d_p}5^p}=\f1{2^{c_{p+1}}3^{d_{p+1}}5^{p+1}}
                                \, .
			\]
			According $\Maxsf$'s strategy, we have
			\[
				\left|t_{p}-\pa{1-\f\beta{30}}(1-a_p)\right|<\f1{30^{5N+1}}
                                \, ,
			\]
			since otherwise $\Maxsf$ would have ended the game. Thus
			\begin{align*}
				\left|1-\f1{2^{c_{p+1}}3^{d_{p+1}}5^{p+1}}-a_{p+1}\right| 
					&= \left|1-\f\beta{30}\f1{2^{c_p}3^{d_p}5^p}-a_p-t_{p}\right|\\
					&= \left|1-a_p - \f\beta{30}\pa{1-a_p-\delta_p} - t_{p}\right|\\
					&= \left|\f\beta{30}\delta_p +\pa{1-\f\beta{30}}(1-a_p) - t_{p}\right|\\
					&\leq \f\beta{30}\mu_p + \left|\pa{1-\f\beta{30}}(1-a_p) - t_{p}\right|\\
					&< \f35\mu_p + \f1{30^{5N+1}}\\
					&\leq \f35\f1{12\times
                                   30^{5N}} + \f1{30^{5N+1}} =
                                   \f1{12\times 30^{5N}}  \, .
			\end{align*}
Now we have
			\begin{align*}
				\min&\enstqcolon{\left| \f1{2^c3^d5^{p+1}} - \f1{2^{c_{p+1}}3^{d_{p+1}}5^{p+1}} \right|}{\begin{array}c
						c,d\in\Nbb\\
						c+d\leq p+1\\
						(c,d)\neq(c_{p+1},d_{p+1})
				\end{array}}\\
				&\geq\f1{30^{p+1}}		
			\geq\f1{30^{N}} 
			\end{align*}					
since any element of the above set is a multiple of $\f1{30^{2(p+1)}}$ and cannot be $0$.

			Therefore, if $(\hat c_{p+1},\hat d_{p+1})\neq (c_{p+1}, d_{p+1}) $,
                        we would have
			\begin{align*}
				&\left|1-\f1{2^{\hat c_{p+1}}3^{\hat d_{p+1}}5^{p+1}}-a_p\right|\\
				&\quad\geq
				\left|\f1{2^{\hat c_{p+1}}3^{\hat d_{p+1}}5^{p+1}}-
				\f1{2^{c_{p+1}}3^{ d_{p+1}}5^{p+1}}\right|
				- \left|1-\f1{2^{c_{p+1}}3^{d_{p+1}}5^{p+1}}-a_{p+1}\right|\\
				&\quad\geq
                \f1{30^{N}}-\f1{12\times30^{5N}}>\f1{12\times30^{5N}}
                                \, .	
			\end{align*}
			Hence $(\hat c_{p+1},\hat d_{p+1})=(c_{p+1}, d_{p+1})$, and
			\[
				\mu_{p+1}=\left|1-\f1{2^{c_{p+1}}3^{d_{p+1}}5^{p+1}}-a_{p+1}\right|\leq
                                \f1{12\times 30^{5N}}  \, . \qedhere
			\]
		\end{itemize}

                \end{proof}

\end{document}